\renewcommand{\cite}{\citep}
\newcommand{\bx}{\mathbf{x}}
\newcommand{\br}{\mathbf{r}}
\newcommand{\sgn}{\operatorname{sgn}}
\title{Evolutionary Games with Affine Fitness Functions: Applications to Cancer}
\author{Moritz Gerstung \and Hani Nakhoul \and Niko Beerenwinkel}
\institute{Department of Biosystems Science and Engineering, ETH Zurich, Mattenstrasse 26, 4058 Basel,
Switzerland. \email{moritz.gerstung@bsse.ethz.ch}}
\journalname{Dynamic Games and Applications}
\date{\today}
\begin{document}
\maketitle

\begin{abstract}
We analyze the dynamics of evolutionary games in which fitness is defined as an affine function of the expected payoff and a constant contribution. The
resulting inhomogeneous replicator equation has an homogeneous equivalent with
modified payoffs. The affine terms also influence the stochastic dynamics of a two-strategy Moran model of a finite population. We then apply the affine fitness function in a model for
tumor-normal cell interactions to determine which are the most successful tumor
strategies. In order to analyze the dynamics of concurrent strategies within a tumor population, we extend the model to a three-strategy game involving distinct tumor cell types as well as normal cells. In this model, interaction with
normal cells, in combination with an increased constant fitness, is the most
effective way of establishing a population of tumor cells in normal tissue.

\keywords{{Evolutionary Game Theory} \and {Replicator Equation} \and {Cancer}
\and Stroma \and Prisoner's Dilemma \and {Moran Process}}

\end{abstract}

\newpage
\section{Introduction}
\label{sec:intro}
Evolutionary dynamics describes changes in populations of
competing individuals over time \cite{Nowak2006}. These changes depend on the notion of fitness, a quantity that describes how many offspring a member of the population is
expected to produce. In the simplest model of fitness, the number of offspring depends only on the individual itself and not on 
other individuals or the environment. More generally, fitness may be modeled as frequency-dependent, accounting for interactions among individuals. In 
evolutionary game theory, fitness is modeled as the outcome of a game whose players adopt distinct strategies; in this framework, individuals are identified with the strategy they play
\cite{Maynard-Smith1982,HofbauerBAMS2003}. The fitness assigned to a given
strategy is typically defined as the expected payoff resulting from playing 
the game with all other strategies present in the population, and in this case,
fitness is a linear function of the frequencies. Recently, nonlinear fitness 
functions have also been discussed \cite{TaylorTPB2006,Prugel-Bennett:1994fk,altrock2009deterministic,Traulsen:2008pt,Traulsen:2007kh}.

An example of an evolving system is the cell population of a tumor. Tumors arise from normal cells
in an organism through mutations that increase their somatic fitness, which leads to
outgrowth of normal tissue by the tumor and eventually to invasion of other organs \cite{CairnsN1975,
NowellS1976}. The increased proliferation of cancer cells is, in part, due to
interactions with normal cells \cite{AxelrodPNAS2006}. One example of
tumor-stroma interactions is vascular endothelial growth factor (VEGF) signaling \cite{MuellerNRC2004}.
Many tumors secrete the mobile growth factor
VEGF which stimulates the production of blood vessels. Angiogenesis, in turn, 
increases the fitness of tumor cells through the supply with nutrients and
oxygen.  Using experimental and bioinformatics methods, it has recently been 
estimated that cancer cells make up only 49\% of the cells in tumor  tissue
\cite{Van-LooPNASUSA2010}. The surprisingly high fraction of normal cells  in a
tumor indicates that normal cells play an important role in tumor development. 
However, it remains elusive to which extent the interaction between normal and 
tumor cells may contribute to the proliferative advantage of tumor cells.

To quantify the somatic evolution of tumors mathematical models are
used \cite{MichorNRC2004}. Approaches include population genetics models
\cite{BeerenwinkelPCB2007,DurrettAAP2009,GerstungMPS2010,BozicPNASUSA2010} that describe the
accumulation of driver mutations which confer a fitness advantage to the tumor cells, and
evolutionary game theory models \cite{BasantaSTICM2008}. Game-theoretic
approaches were used to describe both interactions of tumor cells with the environment \cite{GatenbyCR2003} as
well as among tumor cells \cite{TomlinsonEJC1997}. Interactions are not
restricted to be pairwise. For example, \citet{DingliBJC2009} recently analyzed
the joint interactions of multiple myeloma cells with osteoclast and osteoblast
cells in the framework of evolutionary games. 

In the present work, we model fitness to be composed both of a game-theoretic
interaction term and a constant term specific to each cell type. This choice is
motivated by the fact that cancer cells harbor multiple mutations, which can affect cell-cell interactions, alter intrinsic behavior, or both. Hence, the fitness function is an affine function of the
relative frequencies of normal and tumor cells.

We first analyze the evolutionary dynamics in general in the framework of the 
replicator equation. Specifically, we clarify the role of the interaction 
term relative to the constant fitness term. We show that the Prisoner's Dilemma
game, which does not allow for the evolution of cooperation, is transformed by 
adding a constant fitness term in such a way that cooperation becomes possible for certain parameter choices of the affine
fitness function. We also analyze how the affine fitness terms affect the stability criteria for a Moran model of a two-strategy game in a finite population.

The results for the continuous replicator model are then applied to
assess whether exploitation of normal cells or intrinsic proliferation is more 
evolutionarily favorable for a tumor cell. We analyze a set of tumor
strategies leading to the same equilibrium with normal cells in a pairwise game,
and find that the strategy with both a constant fitness advantage and attraction
of normal cells is most successful in the competition with other tumor
strategies.

\section{Inhomogeneous evolutionary games in infinite populations}
\label{sec:general}

We consider a game with $n$ strategies and payoff matrix $\boldsymbol{M} \in
\mathbb{R}^{n\times n}$. The entry $m_{ij}$ of $\boldsymbol{M}$ denotes the payoff to
strategy $i$ if playing against strategy $j$. In evolutionary game theory, a
fixed strategy is associated to each individual. In our application, we think of
the strategy as being determined by the genetic changes of the cancer cell and
we identify strategies with genotypes and with cell types. We first assume an infinite
population size and describe the state of the population by the vector 
\begin{equation}
\mathbf{x} \in
S_{n-1} = \left\{ \mathbf{x} \in [0,1]^n \: \Big| \: \sum_{i=1}^n x_i = 1 \right\}
\end{equation}
of
relative strategy frequencies. The state space $S_{n-1}$ is the
$(n-1)$-dimensional probability simplex. The fitness of a type $i$ individual is
the expected payoff 
\begin{equation}
f_i(\mathbf{x}) = \sum_{j=1}^n m_{ij} x_j.
\end{equation}
Let us assume now that fitness is composed of such a linear term arising from a
game plus a constant term $r_i \in \mathbb{R}$. In vector notation, the
resulting affine fitness function is
\begin{equation}
\label{eq:fitness}
f(\bx) = \boldsymbol{M} \mathbf{x} + \mathbf{r}
\end{equation}
where $\mathbf{r} \in \mathbb{R}^n$.
For $\mathbf{r}=(0,\ldots,0)^\top$, we recover the strong selection limit,
where fitness is directly given by the expected payoff of the game.
For $w\in\mathbb{R}_+$, $w \ll 1$, and $\mathbf{r}=(1-w, \dots, 1-w)^\top$,
the affine fitness function can be interpreted as that of the game with payoff $\boldsymbol{M}' := (1/w) \boldsymbol{M}$ in the weak selection limit: $f(\bx) = w\boldsymbol{M}' \bx + (1-w, \dots, 1-w)^\top.$
In both limiting cases, all components of the constant term $\mathbf{r}$ are identical. In the following, we 
relax this constraint and allow the components $r_i$ to be different for each strategy.

For infinite population size, the dynamics of reproducing individuals can be
described by the replicator equation \cite{taylor1978evolutionary,zeeman1980population,SchusterJOTB1983} as
\begin{equation}
\label{eq:replicator}
\dot{x}_i = x_i \left[ f_i(\mathbf{x}) - \phi(\mathbf{x}) \right],
\quad i=1,\dots,n
\end{equation}
where $\phi(\bx) = \bx^\top f(\bx)$ is the average fitness of the population.
The fixed
points of this system are the solutions of
the set of algebraic equations
$\dot \bx = 0$. The replicator equation always has the $n$ trivial solutions $\bx^*$ given by $x^*_i=1$ and $x^*_j =
0$, for all $j\neq i$. \citet{hofbauer1998evolutionary} provide a general proof for the possible number of internal equilibria in two-player, n-strategy games. A more general system is considered in \citet{Gokhale:2010tw}.

For the affine fitness function defined in Eq.~\ref{eq:fitness}, the replicator
equation \ref{eq:replicator} is said to be inhomogeneous. The inhomogeneous
replicator equation can be interpreted as the (homogeneous) replicator equation
of a transformed game.

\begin{theorem}[\citealp*{StadlerB1991}]
The inhomogeneous replicator equation with affine fitness function $f(\bx) = \boldsymbol{M} \bx + \br$ is equivalent
to the homogeneous replicator equation with linear fitness function $f'(\bx) = \boldsymbol{M}' \bx$  with
\begin{equation}
\label{eq:transformed-payoff}
m_{ij}' = m_{ij} + r_i.
\end{equation}
\end{theorem}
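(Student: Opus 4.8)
The plan is to exploit the single linear constraint $\sum_{j=1}^n x_j = 1$ defining the simplex $S_{n-1}$ in order to absorb the inhomogeneous term $\br$ directly into the payoff matrix. The key observation is that on $S_{n-1}$ a constant $r_i$ is indistinguishable from a payoff that is linear in $\bx$, since $r_i = r_i \sum_{j=1}^n x_j = \sum_{j=1}^n r_i x_j$. So the "equivalence" in the theorem should turn out to be equality of the two vector fields on the simplex, not merely a topological conjugacy, and the proof is essentially a one-line bookkeeping identity.

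First I would compute the fitness of strategy $i$ under the transformed matrix $\boldsymbol{M}'$ with $m_{ij}' = m_{ij} + r_i$ and check that it reproduces the affine fitness on the simplex:
\begin{equation}
f_i'(\bx) = \sum_{j=1}^n m_{ij}' x_j = \sum_{j=1}^n (m_{ij} + r_i) x_j = \sum_{j=1}^n m_{ij} x_j + r_i \sum_{j=1}^n x_j = f_i(\bx) + r_i ,
\end{equation}
where the last step uses $\bx \in S_{n-1}$. Hence $f'(\bx) = f(\bx)$ as functions on $S_{n-1}$, and consequently the average fitnesses coincide as well, $\phi'(\bx) = \bx^\top f'(\bx) = \bx^\top f(\bx) = \phi(\bx)$. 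Substituting these two identities into the replicator equation \ref{eq:replicator} gives, for every $i$, that $x_i[f_i'(\bx) - \phi'(\bx)] = x_i[f_i(\bx) - \phi(\bx)]$, i.e. the homogeneous system for $\boldsymbol{M}'$ and the inhomogeneous system for $(\boldsymbol{M},\br)$ define the same right-hand side on the simplex. They therefore share all trajectories, fixed points, $\omega$-limit sets, and linear stability properties.

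The one point that deserves a sentence of care — and the only place the argument could be said to have an obstacle — is that the manipulation $r_i = \sum_j r_i x_j$ is valid only where $\sum_j x_j = 1$. So I would close by recalling the standard well-posedness fact that $S_{n-1}$ is forward-invariant under \ref{eq:replicator}: the coordinate faces $x_i = 0$ are invariant, and on the hyperplane $\sum_j x_j = 1$ one has $\sum_i \dot x_i = \sum_i x_i f_i(\bx) - \phi(\bx)\sum_i x_i = \phi(\bx) - \phi(\bx) = 0$, so the dynamics never leaves the region where $f = f'$. Given this, the identification is complete and no genuine difficulty arises; the substantive content of the theorem is precisely that the constant fitness offsets $r_i$ act, on the simplex, as a rank-one (row-constant) perturbation of the payoff matrix.
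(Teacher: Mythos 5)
Your proof is correct and follows essentially the same route as the paper: absorb $r_i$ into the payoff matrix via $\sum_{j} x_j = 1$, conclude $f'(\bx)=f(\bx)$ and hence $\phi'(\bx)=\phi(\bx)$ on $S_{n-1}$, so the two replicator vector fields coincide identically. (One cosmetic slip: since $f_i$ in the theorem denotes the \emph{affine} fitness, the last expression in your display should be $f_i(\bx)$ rather than $f_i(\bx)+r_i$; your subsequent conclusion $f'=f$ is the correct one, and your added remark on forward-invariance of the simplex is a harmless extra precaution the paper leaves implicit.)
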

\begin{proof}
Because $\sum_{j=1}^n x_j=1$, one has
\begin{align}
f_i'(\bx) = \sum_{j=1}^n(m_{ij} + r_i)x_j & = \sum_{j=1}^n m_{ij}x_j + r_i = f_i(\bx).
\end{align}
It follows that $\phi'(\bx) = \bx^\top f'(\bx) = \phi(\bx)$. Hence we have $f_i(\bx)-\phi(\bx) = f_i'(\bx)-\phi'(\bx)$, which completes the proof.
\end{proof}

Theorem~1 shows that the replicator dynamics induced by an affine fitness
function can be obtained from an equivalent homogeneous replicator equation.
However the evolutionary dynamics of the transformed game $\boldsymbol{M}'$ can be
substantially different from the one based on $\boldsymbol{M}$ alone.

\section{Two-player games in infinite populations}
\label{sec:two-player}

Passing from a linear to an affine fitness function by adding a
constant fitness term shifts the equilibrium of the replicator equation. We
consider the inhomogeneous replicator equation for two types of individuals
(strategies) $A$ and $B$ with
\begin{equation}
\label{eq:2-strategy-Mr}
\boldsymbol{M} =
\bordermatrix{
&A&B\cr
A&a&b\cr
B&c&d
} \qquad \text{and} \qquad
\br = 
\begin{pmatrix}
s \cr t
\end{pmatrix}.
\end{equation}

By Theorem~1, there exists a non-trivial fixed
point in which the proportion of $A$ individuals in the population is given by
\begin{equation}
\label{eq:fixedpoint-n2}
x^* = \frac{\beta - \sigma}{\beta - \alpha},
\end{equation}
where we have defined $\alpha = a - c$, $\beta = b - d$, and $\sigma = t - s$.
The fixed point $x^*$ is attractive and in $(0,1)$ if and only if
\begin{equation}
\label{eq:stability}
\alpha < \sigma < \beta,
\end{equation}
that is, if the difference in constant fitness $\sigma$ is between the
differences in payoff $\alpha$ and $\beta$.
It follows that for any game with $\alpha < \beta$ there exist constant fitness contributions $r$ such that $\sigma$
satisfies~(\ref{eq:stability})
and $x^*$ is a stable, non-trivial equilibrium point.

Apart from the stable equilibrium, dominance of $A$ (denoted by
$B\rightarrow A$), dominance of $B$ ($A \rightarrow B$), and an unstable
equilibrium at $x^*$ are possible. The parameter regimes leading to these
dynamics are summarized in Table~\ref{tab:2-strategies}.

\begin{table}
\centering
\begin{tabular}{ccc}
\hline
Behavior & Schematic & Parameter Range\cr
\hline
 Stable internal equilibrium & $A \rightarrow x^* \leftarrow B$ & $\alpha < \sigma < \beta$ \cr
Unstable internal equilibrium & $A \leftarrow x^* \rightarrow B$ & $\alpha > \sigma > \beta$ \cr
A dominates B & $A \longleftarrow B$ & $\sigma<\alpha,\beta$ \cr
B dominates A & $A \longrightarrow B$ & $\sigma>\alpha,\beta$ \cr
\hline
\end{tabular}
\caption{Stability of the 2-strategy replicator equation with affine fitness
function with game $\boldsymbol{M}$ and offset $\br$, Eq.~\ref{eq:2-strategy-Mr}. Parameters
are as in Eq.~\ref{eq:fixedpoint-n2}. 
Arrows indicate the change in the composition of the population over time. 
}
\label{tab:2-strategies}
\end{table}

\subsection{The Prisoner's Dilemma}
\label{sec:prisoners-dilemma}

The Prisoner's Dilemma is a metaphor for the evolution of cooperation \cite{AxelrodS1981}. This game
is defined by the inequalities
\begin{equation}
b < d < a < c 
\end{equation}
Strategy $A$ is called cooperation and denoted $C$, whereas strategy $B$  is
called defection and denoted $D$. The inequalities imply that $\alpha,\beta <  0$.
For linear fitness functions, the non-trivial fixed point $x^*$ lies outside of
the  unit interval; thus  $x_A = 0$, $x_B=1$ is the only attractive fixed point and 
cooperation cannot evolve in this model.

For affine fitness functions, however, there does exist a stable equilibrium
between cooperators and defectors provided that $\alpha< \sigma <\beta$, as
shown in the previous section (Figure~\ref{fig:PD}). The necessary condition
$\alpha<\beta$ does not hold for all Prisoner's Dilemma games. If
$\beta<\alpha$, then there exists only an unstable fixed point in $(0,1)$
(Figure~\ref{fig:PD}). In both cases, the all-cooperator equilibrium is stable,
since $\sigma < \beta$. It is unique if in addition $\sigma < \min \{\alpha,\beta\}$.
In this regime, constant selection dominates the dynamics of the evolutionary game,
always favouring cooperators over defectors. Conversely, if $\sigma > \alpha$,
there exists a stable all-defectors equilibrium, which is unique if in addition
$\sigma>\max\{\alpha, \beta\}$.

\begin{figure}
\centering 
\includegraphics[width=.49\textwidth]{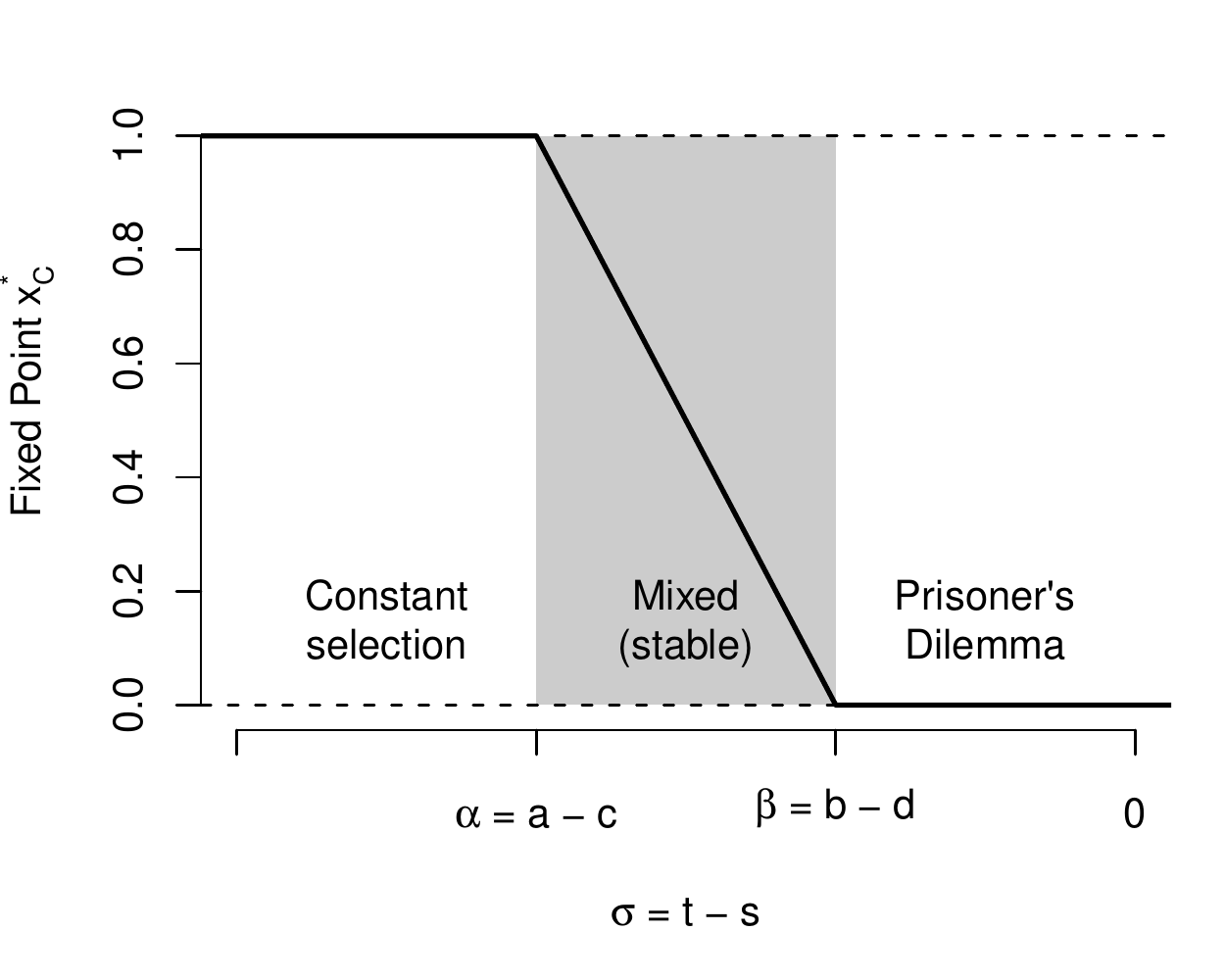}
\includegraphics[width=.49\textwidth]{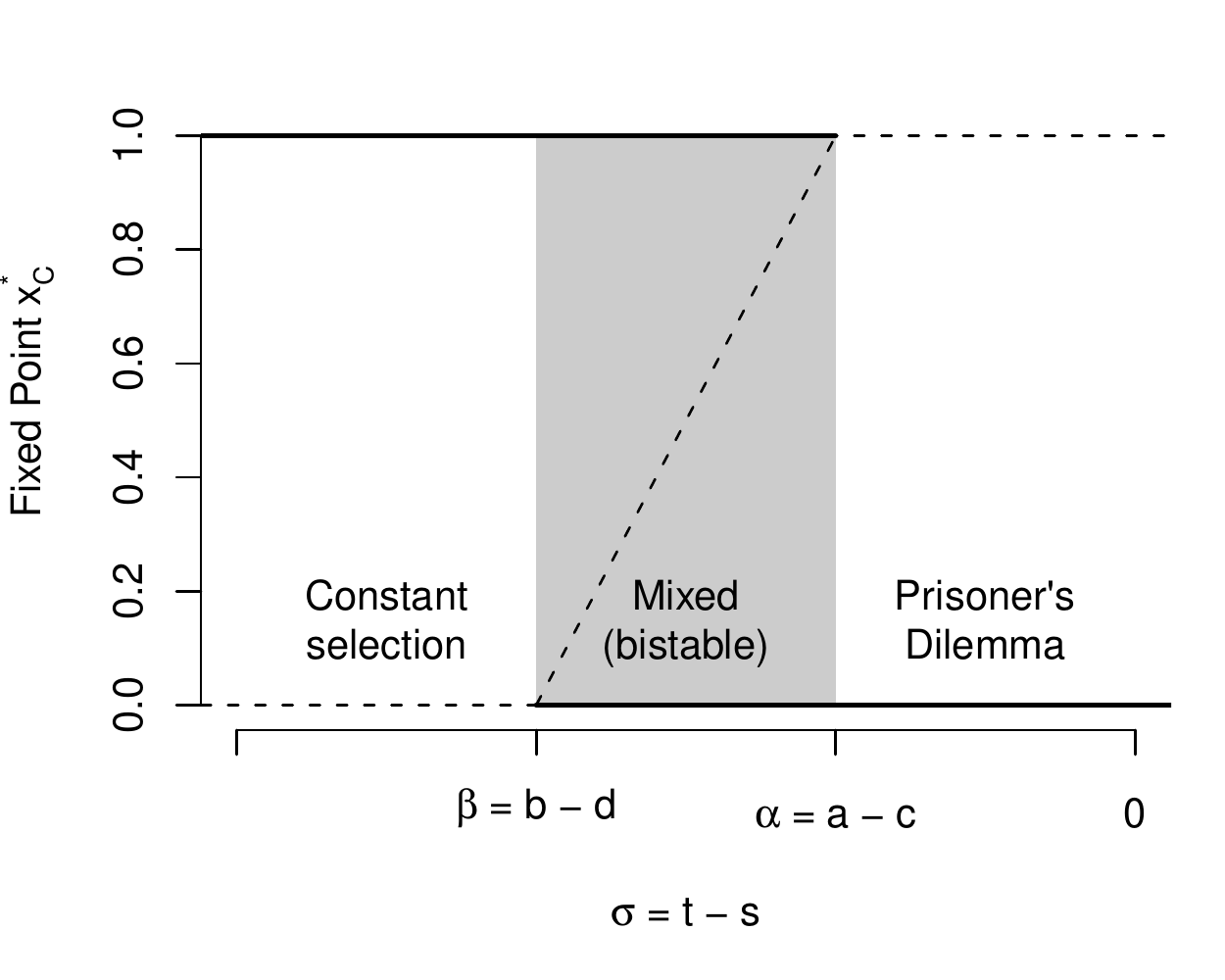}
\caption{Prisoner's Dilemma game with constant fitness advantage $\sigma$ of the
cooperators. The solid black lines denote the frequency of cooperators $x^*_C$
at the stable fixed point $x^*=(x^*_C,x^*_D)^\top$ as a function of
$\sigma=t-s$. Dashed lines are unstable fixed points. For $\alpha < \beta$, there
exists a stable equilibrium (left, grey area) if $\alpha <\sigma<\beta$. For $\beta <\alpha$, the
internal fixed point at $\beta < \sigma < \alpha$ is unstable (right plot). For the homogeneous Prisoner's Dilemma, $\sigma = 0$,
defection is always stable.}
\label{fig:PD}
\end{figure} 

The Prisoner's Dilemma has been studied in many variations in order to derive
conditions under which cooperation can evolve. For example, \citet{Nowak2006}
listed five rules for the evolution of cooperation. To formulate and to
compare these rules, often a simplified version of the Prisoner's Dilemma is
considered, where a cost $\mathfrak{c}$ is paid for cooperation, of which other
cells receive a benefit $\mathfrak{b}$. Defectors do not pay a cost, and other
cells do not receive a benefit from defectors:
\begin{equation}  \label{eq:PD-simple}
\boldsymbol{M} =
\bordermatrix{
&C&D\cr
C&\mathfrak{b}-\mathfrak{c} & - \mathfrak{c} \cr
D&\mathfrak{b} &0
}
\end{equation}
We consider the affine fitness function obtained from this simplified Prisoner's
Dilemma game plus a constant fitness term. Because $\alpha = \beta =
-\mathfrak{c}$,  the replicator dynamics do not allow for stable coexistence
between cooperators and defectors. However, both pure strategies can be
stable, namely cooperation for $\sigma < -\mathfrak{c}$, or equivalently,
\begin{equation}  \label{eq:rule6}
t - s > \mathfrak{c}
\end{equation}
and defection otherwise (Figure~\ref{fig:PD}).
Hence, selection favors cooperators if the constant fitness
advantage is higher than the cost to pay for cooperation.
This rule for the evolution of cooperation makes precise the tradeoff between
constant fitness contributions and those resulting from playing the Prisoner's Dilemma game.

\section{Inhomogeneous two-player games in finite populations}
\label{sec:finite}
The deterministic replicator equation describes the dynamics of infinite populations. In finite populations a stochastic description in which interactions between members of the population are considered individually is more appropriate.

We consider a finite population of constant size $N$, containing as before two types of individuals (strategies) $A$ and $B.$ $\boldsymbol{M}$ and $\br$ are defined as in Eq.~\ref{eq:2-strategy-Mr}. Let $i$ denote the number of $A$ individuals in the population, and denote by $F_i$ and $G_i$ the sum of the expected game payoff and the constant fitness term for types $A$ and $B$ respectively. We have
\begin{align}
F_i &= \frac{(i-1)a + (N-i)b}{N-1} + s = \frac{(i-1)(a+s) + (N-i)(b+s)}{N-1} \\
G_i &= \frac{ic + (N-i-1)d}{N-1} + t = \frac{i(c+t) + (N-i-1)(d+t)}{N-1}.
\end{align}
Also for a finite population the entries of $\boldsymbol{M}$ may be transformed according to Eq.~\ref{eq:transformed-payoff} to an equivalent homogeneous game $\boldsymbol{M}'$. Yet the resulting changes of the stochastic dynamics in a finite population differ from those that arise in an infinite population. In the following, we discuss three different measures for the evolutionary success of a strategy in a finite population and study how these are affected by the constant fitness terms. 

Following \citet{NowakN2004}, we say that $A$ resists invasion by $B$, or that $A$ is an evolutionarily stable strategy (ESS$_N$), if $G_{N-1} < F_{N-1}$, i.e., if one $B$ individual in an otherwise all-$A$ population has lower fitness than the $B$ individuals. This condition may equivalently be written as
\begin{equation}
(N-1)(\alpha - \sigma) > a-b,
\end{equation}
where $\sigma = t - s$, as above. Note that for $N\rightarrow \infty$, this condition is equivalent to $\sigma < \alpha$, from which it follows that $x_A = 0$ is an unstable fixed point of the replicator equation (Table~\ref{tab:2-strategies}). Similarly, one obtains that $B$ is ESS$_N$ if $F_{N-1}>G_{N-1}$, or 
\begin{equation}
(N-1)(\beta - \sigma) < c-d,
\end{equation}
which is for large $N$ again equivalent to $\sigma < \beta$.

The Moran process \cite{Moran1962} provides one model for evolutionary dynamics in finite populations. In each step of the process, one individual is chosen to reproduce with probability proportional to its fitness; its offspring replaces another individual, chosen at random from the population. Denote by $P_{i,j}$ the probability that given $i$ $A$ individuals, one step of the process yields $j$ $A$ individuals. We have
\begin{align}
P_{i,i+1} &= \frac{if_i}{if_i + (N-i)g_i} \cdot \frac{N-i}{N} \\
P_{i,i-1} &= \frac{(N-i)g_i}{if_i + (N-i)g_i} \cdot \frac{i}{N} \\
P_{i,i} &= 1 - P_{i,i+1} - P_{i,i-1},
\end{align}
and $P_{i,j}=0$ otherwise. To compute the fixation probabilities in the presence of an affine fitness term, we consider the limit of weak selection \cite{TaylorBMB2004,antal2006fixation,lessard2007probability,
wu2010universality}. We define the frequency-dependent fitness for types $A$ and $B$, respectively, by
\begin{align}
f_i &= 1 - w + wF_i \\
g_i &= 1 - w + wG_i,
\end{align}
where the parameter $w \in (0,1)$ defines the intensity of selection. Small values of $w$ indicate near-neutral evolutionary pressure \cite{kimura1985neutral,ohta2002near}

The fixation probability of type $A$, i.e., the probability that one $A$ individual will take over an otherwise all-$B$ population, in the Moran process is $\rho_A = [1 + \sum_{k=1}^{N-1} \prod_{i=1}^{k} (g_i/f_i)]^{-1}$ \cite{karlin1975first,TaylorBMB2004}.
For small $w$ a power series expansion yields
\begin{align}
\rho_A 
&= \frac{1}{N} \cdot \frac{1}{1 - w\left( ({xN - y})/{6} - {(N-1) \sigma}/{2} \right)},
\end{align}
where $x = a + 2b - c - 2d = \alpha + 2\beta$ and $y = 2a + b + c - 4d$ \cite{Nowak2006}. Further details about the convergence of the weak selection limit can be found in \citet{wu2010universality}.

For an $A$ mutant with no fitness advantage over $B,$ we have $\rho_A = 1/N.$ We say $A$ is advantageous (AD), if selection favors the fixation of $A$, $\rho_A > 1/N$, or equivalently
\begin{equation}
xN - y - 3(N-1) \sigma > 0.
\end{equation}
For large $N$, this becomes
\begin{equation}
\label{eq:AD-affine}
\sigma < \frac{\alpha + 2\beta}{3}.
\end{equation}
Hence this condition is fulfilled if the condition $\sigma < \alpha,\beta$ for the replicator equation holds.
Equilibrium is reached when $F_i = G_i,$ or equivalently
\begin{equation}
i =  \frac{\beta - \sigma}{\beta - \alpha},
\end{equation}
in agreement with the continuous case, Eq.~\ref{eq:fixedpoint-n2}.

Lastly, we say $A$ is risk-dominant (RD) over $B$ if $\rho_A > \rho_B.$ In the case of weak selection,
this is equivalent to
\begin{equation}
\frac{\alpha+\beta}{2}N - 6(a-d) - (N-1)\sigma >0.
\end{equation}
For large $N$, $A$ is RD if, and only if
\begin{equation}
\frac{\alpha + \beta}{2} - \sigma > 0
\end{equation}
holds. The criteria for AD, ESS$_N$, and RD are summarized in Table~\ref{tab:finite-pop}. Note that all are fulfilled if $\sigma < \alpha$, as required for $A$
 to be evolutionarily stable in the replicator equation.
 
\begin{table}
	\centering
	\begin{tabular}{lcc}
		\hline
		Criterion & Small population & Large population \\[1ex]
		\hline
		ESS$_N$ & $(N-1)(\alpha - \sigma) < a-b$ & $\sigma < \alpha$\\[1ex]
		AD & $xN - y - 3(N-1) \sigma > 0$ & $\sigma < ({\alpha+2\beta}){/3}$ \\[1ex] 
		RD & $ ({\alpha+\beta})N/2 - 6(a-d) - (N-1)\sigma >0$ & $\sigma < ({\alpha+\beta}){/2}$ \\[1ex]
		\hline
	\end{tabular}
	\caption{Criteria for the evolutionary stability of strategy $A$ in finite populations of small and large size.}
	\label{tab:finite-pop}
\end{table}

As $N$ grows large it is also possible to compute the expected time until fixation. As shown by \citet{antal2006fixation}, the fixation time for an AD strategy starting from a single mutant scales like $N \ln N$, whereas that of a neutral strategy scales like $N^2$. The affine modification affects whether a strategy is advantageous but not the scaling of the fixation time.

\subsection{Affine Prisoner's Dilemma in finite populations}
We return to the affine modification of the Prisoner's Dilemma defined in Eq.~\ref{eq:PD-simple}, considering it now in a finite population of size $N$. We have
\begin{equation}
\rho_C = \frac{1}{N} \cdot \frac{1}{1+\frac{w}{2}((N-1)(\sigma + \frak{c}) - \frak{b})}.
\end{equation}
Cooperation is AD, $\rho_C > 1/N$, if 
\begin{equation}
\label{eq:ESS-PD-simple-finite}
-\sigma > \mathfrak{c} - \frac{\mathfrak{b}}{N-1}.
\end{equation}
That is, the constant fitness advantage $-\sigma$ must be larger than the cost of cooperation $\mathfrak{c}$ minus the benefit $\mathfrak{b}$ divided by the population size minus one. It thus appears that  the fixation probability of cooperators is higher in small populations, than in large ones. For $N\rightarrow\infty$ condition Eq.~\ref{eq:ESS-PD-simple-finite} results in $-\sigma > \frak{c},$ again in agreement with the continuous case.

Because we have $\alpha = -\frak{c} = \beta$, the conditions for ESS$_N$ and RD are equivalent to $-\sigma > \mathfrak{c}$. For $\sigma = 0$, cooperation satisfies none of the criteria for evolutionary success. However, in the affine case, each such criterion is fulfilled if the constant fitness advantage  $-\sigma$ is higher than the cost $\mathfrak{c}$. Thus cooperation can also arise in small populations if cooperators can compensate the cost of cooperation by a constant fitness surplus.

\section{Coevolution of tumor and normal cells}
\label{sec:coev}


Tumors present an example of the evolution of defection, because cancer cells
have lost their normal cooperative behavior and defect the host
\cite{MichorNRC2004}. However, experiments indicate that tumors consist of about
50\% of non-cancerous cells. This fraction appears to be consistent among
distinct cancer subtypes \cite{Van-LooPNASUSA2010}, which raises the possibility
that the normal cells are not the result of a random admixture, but constitute
an attractive equilibrium resulting from interactions of tumor and normal cells.

As shown in section~\ref{sec:two-player}, the two-player dynamics only depends
on the difference $\alpha$ and $\beta$, as well as $\sigma$. For a given
attractive equilibrium, there exist thus many parameter combinations. In the
following, we set $x^*= (1/2, 1/2)$, which is reasonably close to the
experimentally observed ratio of 49\% tumor cells \cite{Van-LooPNASUSA2010}. We
observe that this choice implies $\sigma = \frac{1}{2}(\alpha + \beta)$, which reduces
the number of independent parameters to two.

To explore the range of possible parameter combinations, we define the following
three tumor cell strategies to be played against a normal cell type:
\begin{equation}
\begin{array}{llll}
\label{eq:T1-T3}
 T_1:\qquad & \alpha_1=-1\quad& \beta_1=1\quad & \sigma_1=0 \cr
 T_2: & \alpha_2= -2& \beta_2=0 &  \sigma_2=-1 \cr
 T_3: & \alpha_3= 0& \beta_3=2 & \sigma_3=1
\end{array}
\end{equation}
When played pairwise against normal cells all three strategies 
lead to a stable equilibrium at $x^*_N=1/2$. Tumor strategy $T_1$ presents a
mixed strategy of exploitation and attraction of normal cells, without an additional intrinsic
fitness advantage. $T_2$ is a strategy that strongly exploits (Prisoner's
Dilemma),  however at a the cost of a disadvantage in terms of the constant 
fitness contribution. $T_3$ is a mixed strategy that has both a constant fitness
advantage  and the ability to recruit healthy cells.

\subsection{Three-player games}
\label{seq:3-player}

In large tumors a huge genetic diversity is
generated due to the large number of cells and a potentially increased mutation
rate \cite{BeerenwinkelPCB2007,BozicPNASUSA2010}. It is thus likely that many
different tumor cell types with specific strategies are present in the tumor
simultaneously. Hence a strategy able to dominate many others will be
successful in taking over a tumor.

Let $H$ denote the normal (healthy) cell type and $T_1$ a tumor strategy. Again there is
an affine payoff function for the tumor-normal interaction with the payoff
matrix $\boldsymbol{M}$ and constant fitness $\br$. Now consider a second tumor strategy $T_2$. Assuming
no interactions among the tumor cell types, the payoff matrix and fitness
vector for the three strategies are

\begin{equation}
\boldsymbol{M} = 
\bordermatrix{ 
& H & T_1 & T_2 \cr 
H & a & b_1 & b_2 \cr
T_1 & c_1 & d_1 & 0 \cr
T_2 & c_2 & 0 & d_2
},
\qquad
\br=
\begin{pmatrix}
s \\ t_1 \\ t_2
\end{pmatrix}.
\end{equation} 
According to Theorem~1 the affine fitness function can be rewritten in terms of
the game
\begin{equation}
\label{eq:3-player-M-prime}
\boldsymbol{M}' = 
\bordermatrix{ 
& H & T_1 & T_2 \cr 
H & 0 & 0 & 0 \cr
T_1 & -\alpha_1 + \sigma_1 & -\beta_1 + \sigma_1 & -b_2 + \sigma_1 \cr
T_2 & -\alpha_2 + \sigma_2 & -b_1 + \sigma_2 & -\beta_2 + \sigma_2 \cr
},
\end{equation}
where the first row has been subtracted from all rows to obtain a representation
in terms of $\alpha_i$, $\beta_i$, and $\sigma_i$, $i=1,2$. Interestingly, the constant
fitness terms lead to an effective interaction term among the tumor strategies.
Moreover, the interactions, $m'_{T_1,T_2} = -b_1 + \sigma_2$ and $m'_{T_2,T_1} =
-b_2 + \sigma_1$, depend on the absolute value of $b_1$ and $b_2$, i.e., on the
payoff that a tumor pays to the fitness of normal cells.

We now investigate the replicator dynamics for the transformed game defined
in Eq.~\ref{eq:3-player-M-prime} and the tumor strategies defined in
Eq.~\ref{eq:T1-T3}, following \citet{hofbauer1998evolutionary,bomze1983lotka}. One has $\alpha_i-\sigma_i = -1$, and $\beta_i -
\sigma_i = 1$, $i=1,\ldots,3$. We also
assume that $b_1=b_2=:b$, any difference can be subsumed into the parameters $\beta_i$. It follows that
\begin{equation}
\label{eq:3-player-T-M-prime}
\boldsymbol{M}' = 
\bordermatrix{ 
& H & T_1 & T_2 \cr 
H & 0 & 0 & 0 \cr
T_1 & 1 & -1 & -b - \sigma_1 \cr
T_2 & 1 & -b - \sigma_2 & -1 \cr
}
\end{equation}

This game has the non-trivial fixed
points $x^*=(1/2,1/2,0)^\top$, and $x^*=(1/2,0,1/2)^\top$. However, it depends
on the parameter $b$ whether these are stable or saddle points.
As shown in Appendix~A, there exists an additional fixed point in the interior
of $S_2$ at
\begin{equation}
\label{eq:internal-fixedpoint}
x^* = (\omega_1,\omega_2,\omega_3)^\top/\omega 
\end{equation}
with
\begin{align}
\label{eq:omega_i}
\omega_1 &= 1 - (b - \sigma_1)(b - \sigma_2) \cr
\omega_2 &= 1 - (b - \sigma_1)
\cr \omega_3 &= 1 - (b - \sigma_2),
\end{align}
and $\omega = \sum\omega_i$, if, and only if, all $\omega_i$
have the same sign, $\sgn \omega_1 = \sgn \omega_2 = \sgn \omega_3$
\cite{StadlerBMB1990}.

The dynamic behavior of all three player games $HT_iT_j$ of the normal cell
type with two of the tumor strategies $T_1$, $T_2$, and $T_3$ can be divided
into the following four cases:
\renewcommand{\theenumi}{\roman{enumi}}
\renewcommand{\labelenumi}{(\theenumi)}
\begin{enumerate}
   \item For
	\begin{equation}
	b < \frac{\sigma_1+\sigma_2}{2} -
	\sqrt{\left(\frac{\sigma_1+\sigma_2}{2}\right)^2 - (1-\sigma_1\sigma_2)}
	\end{equation}  
	the normal cell type goes extinct. There exist only a stable equilibrium of the
	two tumor strategies. In this case the constant fitness advantages of both
	tumor types are so large that despite the payoff $b$ to the normal cell type,
	$H$ dies out.
  \item In the regime
  \begin{equation}
  \frac{\sigma_1+\sigma_2}{2} -
\sqrt{\left(\frac{\sigma_1+\sigma_2}{2}\right)^2 - (1-\sigma_1\sigma_2)} < b < 1
+ \min\{\sigma_1,\sigma_2\}
  \end{equation}
  there exists a stable fixed point in the interior of the simplex. Hence all
  three cell types coexist. In this regime the constant fitness advantage of
  both tumor types are smaller than the fitness but not too large to have the
  normal cell type go extinct.

 \item If 
  \begin{equation}
  1 + \min\{\sigma_1,\sigma_2\} < b < 1 +
  \max\{\sigma_1,\sigma_2\}
  \end{equation} there exists no interior fixed point. Then only the fixed point $x_H=1/2$,
  $x_{T_i}=1/2$, for $\sigma_i > \sigma_j$ is stable. That is, the tumor
  strategy that has the larger fitness advantage will win. Equivalently, we can
  write $\min\{t_1-s,t_2-s\}< b-1 <\max\{t_1-s,t_2-s\}$. In this
  case, the constant fitness advantage of one tumor type is smaller than the
  payoff $b$ minus $1$, whereas the other tumor types' fitness advantage exceeds
  this value.

	  \item Lastly, for 
  \begin{equation}
  b> 1 + \max\{\sigma_1,\sigma_2\}
  \end{equation}
  there exists an interior saddle
  point and both fixed points at the edges, $x^*=(1/2,1/2,0)^\top$, and
  $x^*=(1/2,0,1/2)\top$ are stable. Which tumor strategy wins, depends on
  which of the two tumor strategies emerged first. The condition is equivalent to
  $b>1+\max\{t_1,t_2\}-s$. In this regime, the payoff to a normal cell $b$ is
  larger than the constant fitness advantages to both tumor types plus one.
  Therefore, both tumor types attract normal cells more strongly than their
  constant fitness advantage.
	
\end{enumerate}

\begin{figure}
	\centering
	\setlength{\unitlength}{\textwidth}
	\begin{picture}(1,0.27)
		\put(0.0,0.02){\includegraphics[width=.25\textwidth, trim=10 10 10 10,
		clip=true]{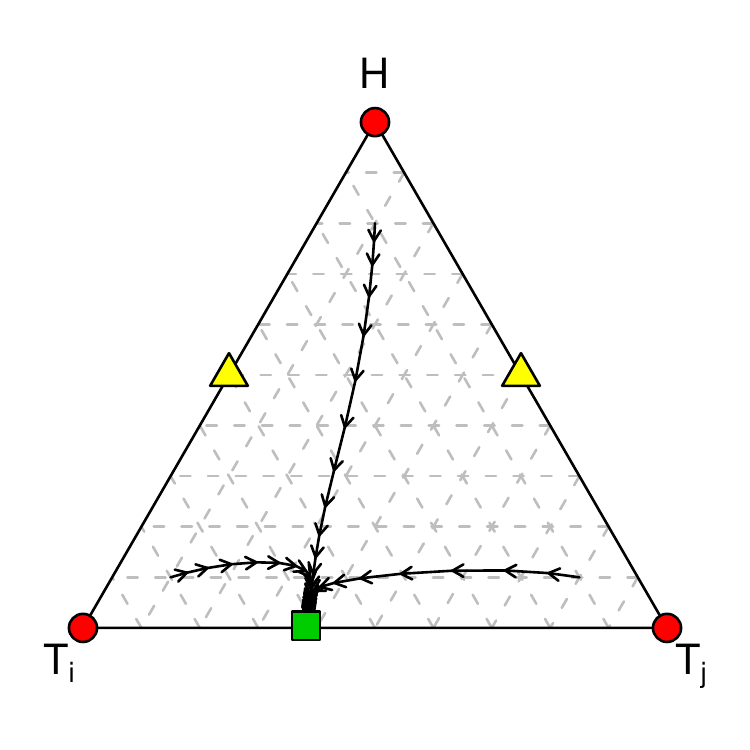}} 
		\put(.25,0.02){\includegraphics[width=.25\textwidth,
		trim=10 10 10 10, clip=true]{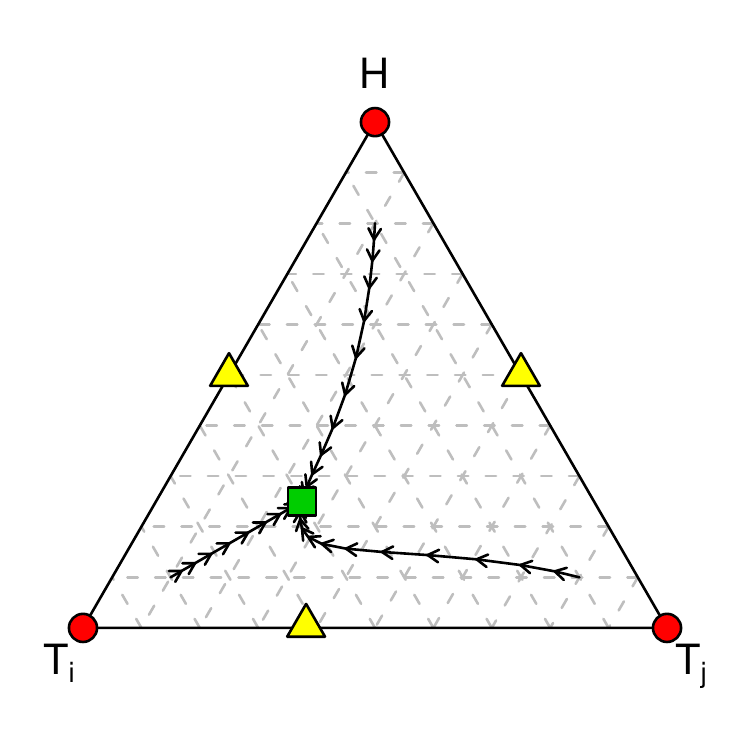}}
		\put(.5,0.02){\includegraphics[width=.25\textwidth, trim=10 10 10 10,
		clip=true]{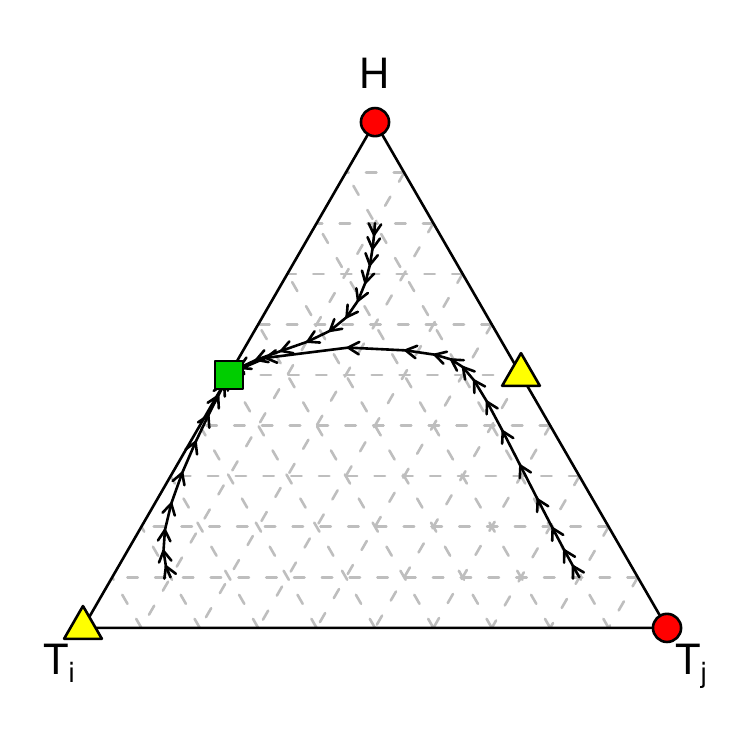}}
		\put(.75,0.02){\includegraphics[width=.25\textwidth, trim=10 10 10 10,
		clip=true]{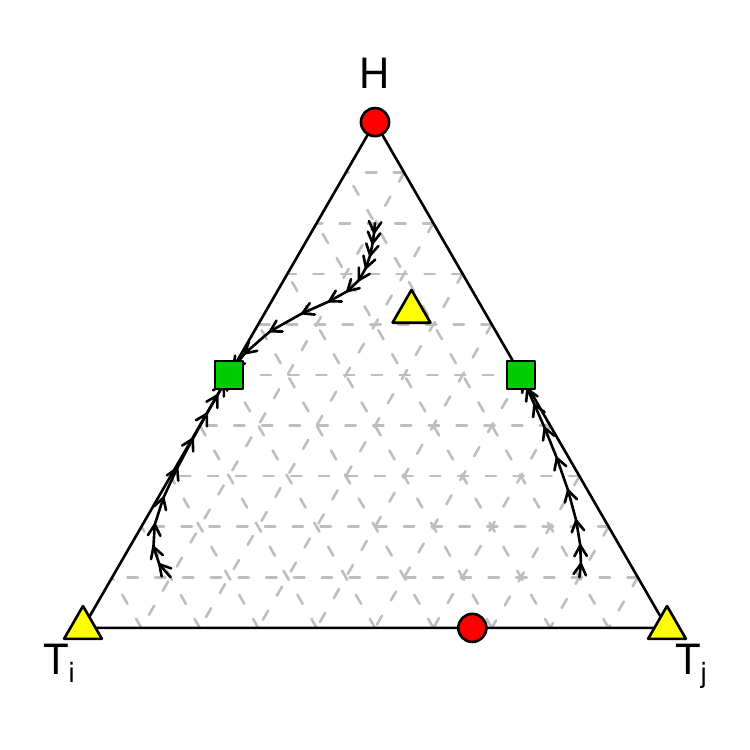}}
		\put(-.01,.25){(i) $b=-2$}
		\put(0.23,.25){(ii) $b=-1$}
		\put(0.48,.25){(iii) $b=1$}
		\put(0.73,.25){(iv) $b=2$}
		\put(0.45,0){$\xrightarrow{\hspace*{1cm}} b$}
	\end{picture}
	\caption{Illustration of the dynamic regimes (i)--(iv). Specific values are for
	the game $NT_1T_2$ for different values of the parameter $b$. Red circles denote unstable fixed points, yellow triangles are saddle points, and green squares are stable equilibria. For case (iii), the fixed point on the edge $NT_i$ for the tumor strategy $T_i$ with $\sigma_i > \sigma_j$ is stable.}
	\label{fig:class-dynam}
\end{figure}

\begin{figure}
\setlength{\unitlength}{\textwidth}
\begin{picture}(1,1.333)
\put(0,1){\includegraphics[width=\textwidth]{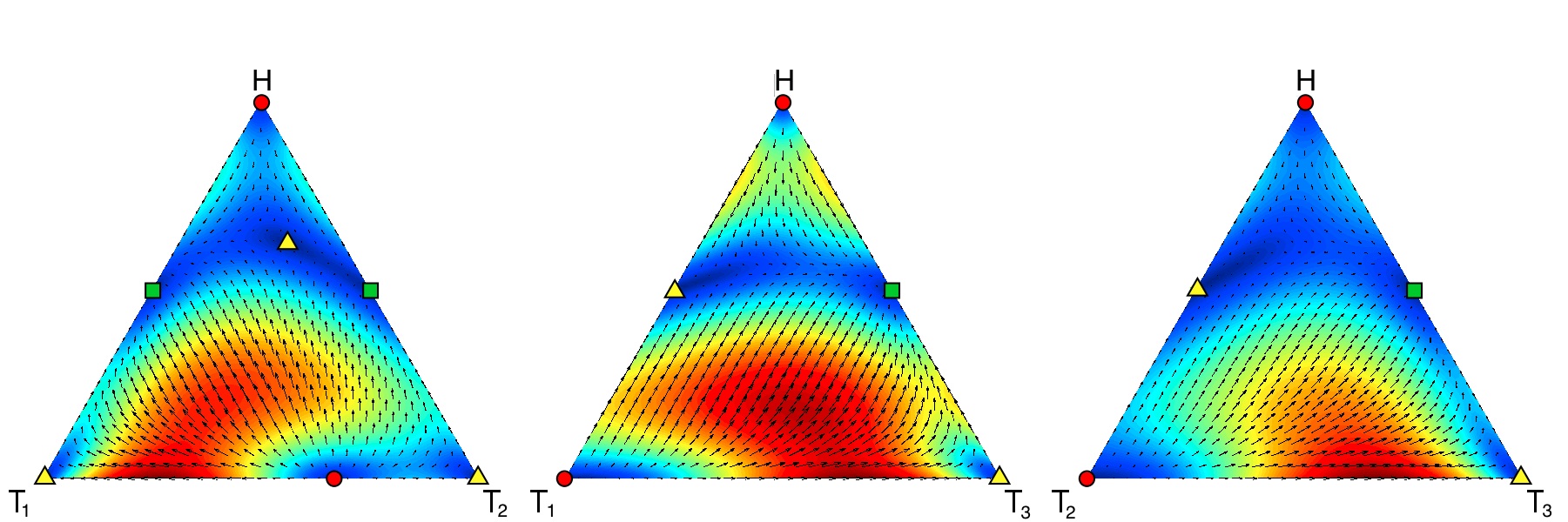}}
\put(0,.66){\includegraphics[width=\textwidth]{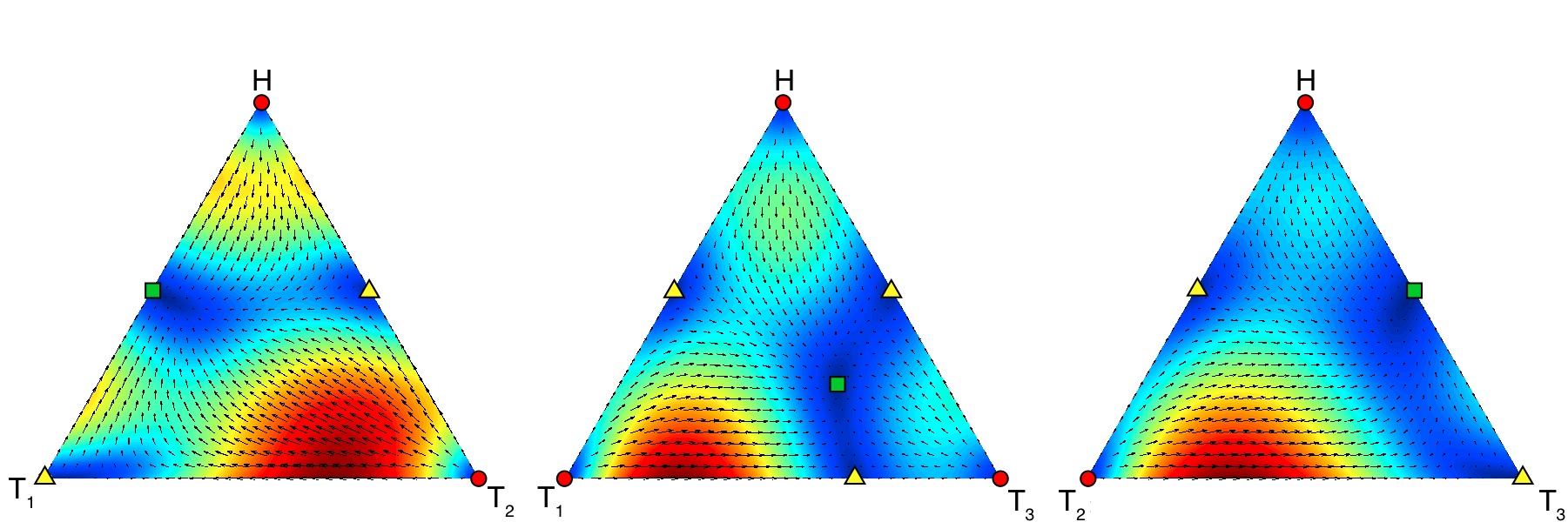}}
\put(0,.33){\includegraphics[width=\textwidth]{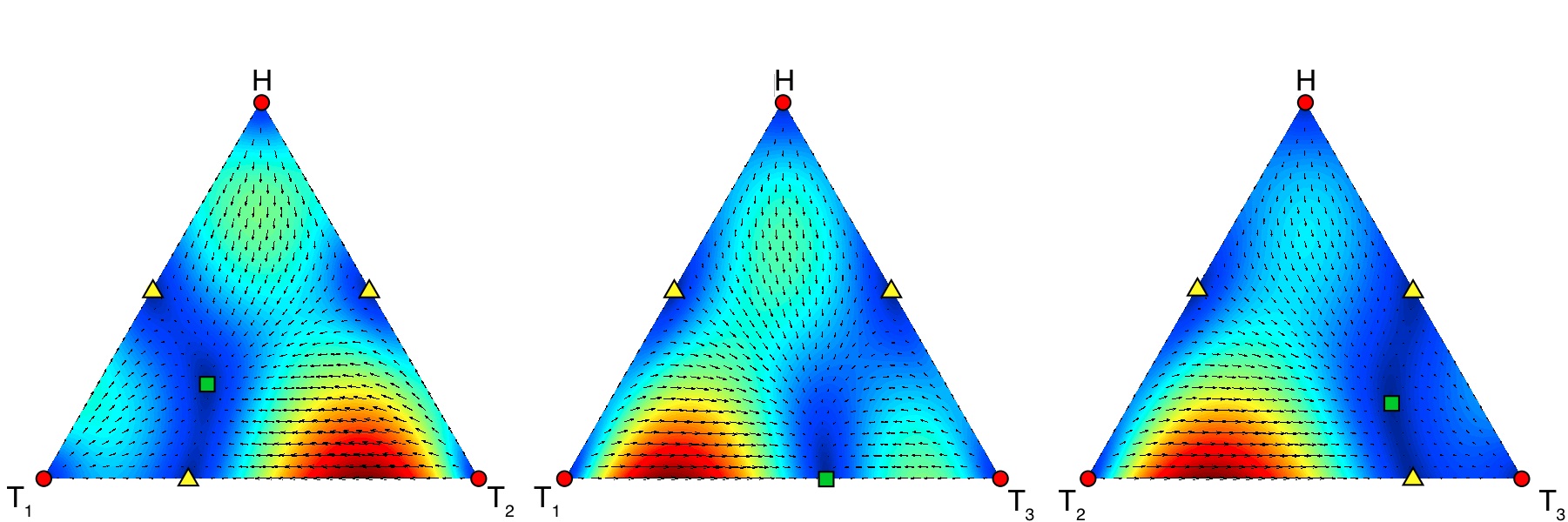}}
\put(0,0){\includegraphics[width=.33\textwidth, trim=10 10 10 10,
clip=true]{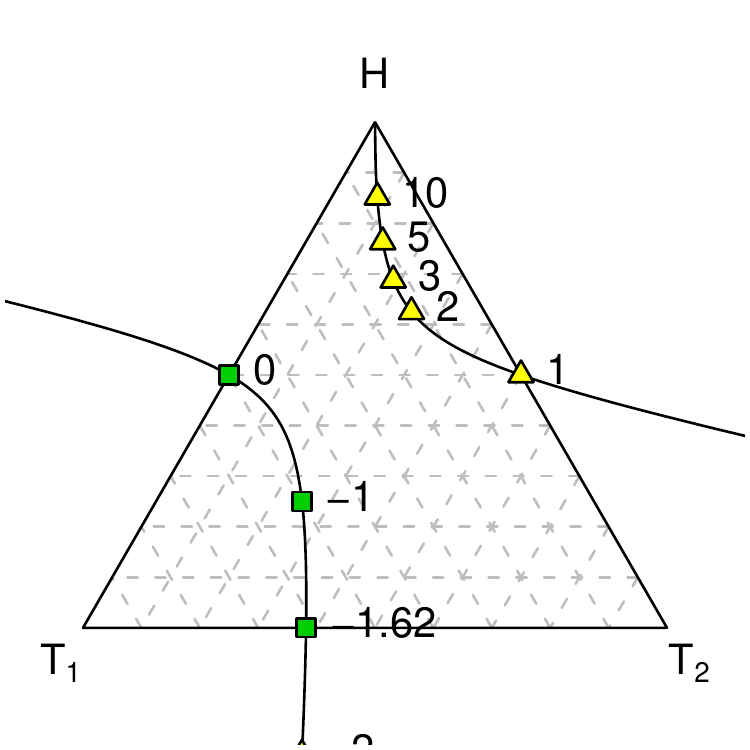}}
\put(.333,0){\includegraphics[width=.33\textwidth,
trim=10 10 10 10, clip=true]{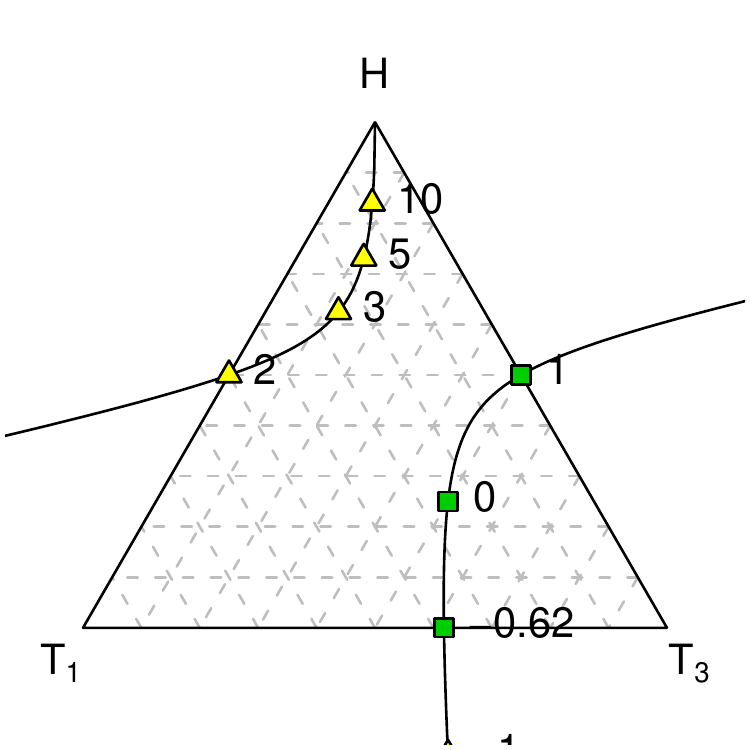}}
\put(.666,0){\includegraphics[width=.33\textwidth, trim=10 10 10 10,
clip=true]{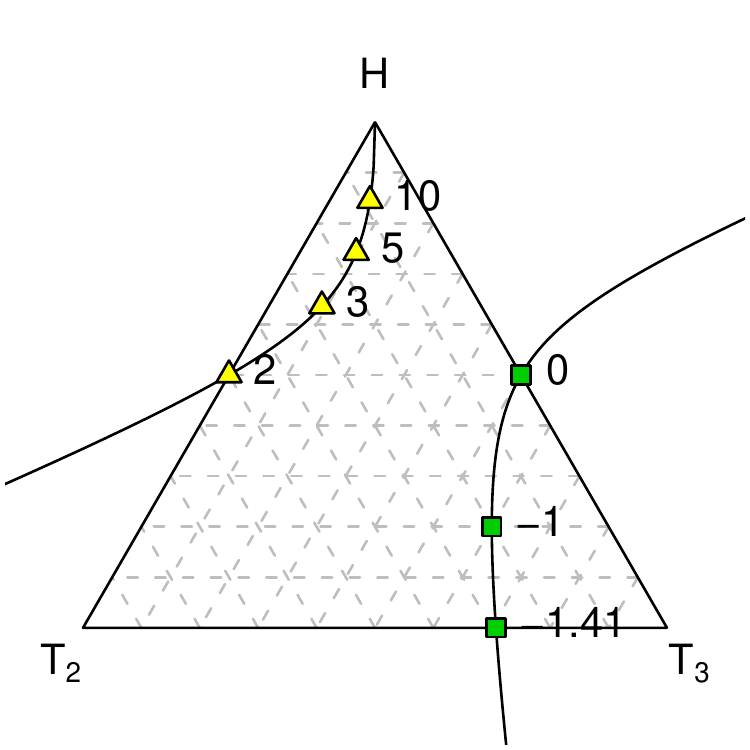}}
\renewcommand{\sffamily}{\fontfamily{phv}}
\put(0,1.26){\large A. $b=2$}
\put(0,0.93){\large B. $b=0$}
\put(0,0.60){\large C. $b=-1$}
\put(0,.28){\large D.}
\end{picture}

\caption{A-C: Tumor strategies in the 3 player case for $b=2,0,-1$.
Arrows denote the direction, and colors the rate of change defined by the
replicator equation, Eq.~\ref{eq:replicator}. Points label the fixed points of
the system, obtained by solving the algebraic equation defined by $\dot x = 0$. Red circles are repelling fixed points (two real
parts of the eigenvalues $\lambda_i$, $i=1,
\dots 3$, of the Jacobian $J(x^*)$ positive),
yellow triangles: saddle points (1 real part positive), and green squares are stable fixed points
with two real parts negative. D: Position and stability of the interior fixed
point as a function of $b$ for all three games. Again yellow dots denote saddle
points, green circles are stable fixed points.}
\label{fig:3player}
\end{figure}

The dynamics associated with each case are illustrated in
Figure~\ref{fig:class-dynam}. These different cases are realized for each game
at different parameter values of $b$. For $b=2$ (Figure~\ref{fig:3player}A), we have case (vi) for the game $HT_1T_2$, and case (iii) for games $HT_1T_3$, $HT_2T_3$. Thus for $HT_1T_2$, whichever of the strategies $T_1$ and $T_2$ arises first
will be successful and converges to an equilibrium with $H$. In the other
two games, however strategy $T_3$ wins over $T_1$, and $T_2$, respectively. We
thus conclude that $T_3$ is most successful. The coordinates of the equilibrium interior equilibrium $x^*$ defined in
Eq.~\ref{eq:internal-fixedpoint} are depicted in Figure~\ref{fig:3player}D for
the three games $HT_1T_2$, $HT_1T_3$, and $HT_2T_3$ as a function of the
parameter $b$.

Because tumor strategy
$T_3$ has the largest constant fitness advantage, $\sigma_3 > \sigma_1 >
\sigma_2$, it outcompetes the other tumor strategies for most values of $b$. If $b$ is
of moderate size, there exists an equilibrium with $T_1$
(Figure~\ref{fig:3player}B). The equilibrium level lies, however, at low
frequencies of $T_1$. There also exists a stable equilibrium with $T_2$ for
$b<0$ (Figure~\ref{fig:3player}C). Yet the latter appears biologically
unrealistic, as it would effectively repel normal cells, which contradicts the
consensus of a positive interaction \cite{MuellerNRC2004}.

Tumor strategy $T_1$, which both exploits and attracts normal cells, wins over
$T_2$ for $b>0$. This is because of the fitness disadvantage of $T_2$. In comparison with $T_3$, there is a bistable equilibrium
for $b>2$. For $0< b < 2$, however, $T_3$ dominates $T_2$.

The affine Prisoner's Dilemma strategy $T_2$ performs worst of all, because it
has a fitness {disadvantage} compared to normal cells, in contrast to the
other two tumor strategies. This disadvantage is required to generate a stable
equilibrium with normal cells. Due to this disadvantage, however, the strategy is 
easily outcompeted by $T_1$ and $T_3$.

\section{Discussion}
In evolutionary games with affine fitness functions, the corresponding inhomogeneous replicator equation has an
equivalent homogeneous replicator equation with a transformed game. The
transformations of the game, however, can cause substantial differences as
compared to the original game. For $n=2$ dimensions, the resulting changes can be fully characterized in terms
of the constant fitness difference $\sigma$. 
The affine transformations of the payoff matrix also influence the stochastic Moran model of a finite population, and we have evaluated how different criteria for the stability of a strategy are affected.

The affine terms also
alter the dynamics of the Prisoner's Dilemma game, a prototype game for
studying the evolution of cooperation \cite{AxelrodS1981}. In the presence of a constant fitness
advantage, cooperation can arise if the fitness advantage is higher than the
cost to pay for cooperation. This simple rule adds to the five rules for the evolution of cooperation that were
presented recently \cite{NowakS2006}. It makes precise the intuitive idea that cooperators can evolve if they compensate their disadvantage in the game by an intrinsic fitness contribution.

In mathematical and computational models of cancer, reviewed recently by \citet{attolini2009evolutionary}, evolutionary game theory provides a useful tool for analyzing the role of cell differentiation and heterogeneity in tumor initiation and progression. In the earliest studies applying evolutionary game theory to cancer, Tomlinson and Bodmer investigated the effects on tumor progression of several cell behaviors, including cytotoxicity, angiogenesis, and apoptosis \cite{TomlinsonEJC1997,tomlinson1997modelling}. Further studies have used evolutionary game theory to model spatial dynamics \cite{bach2003spatial}, tumor-host interaction \cite{GatenbyCR2003}, and the interaction of multiple tumor cell phenotypes \cite{Bach:2001fu}, in particular those of invasive or motile cells \cite{mansury2006evolutionary,BasantaSTICM2008,basanta2008evolutionary}. In the present study, we take explicit account of the effects of affine transformations on evolutionary games in cancer, and we use this framework to investigate interactions among cell types within the tumor population.

Tumor cells present an example where defecting strategies arise in an
organism of cooperating cells. Experimental data indicates that tumor cells
exist in a stable equilibrium with normal cells \cite{Van-LooPNASUSA2010}. We
have therefore analyzed what type of tumor strategies would lead to such coexisting states by means of an
inhomogeneous replicator equation. Fitness is modeled to contain both a
game-theoretic and a constant term. Because most parameters of the model
cannot be directly obtained by experiment, we assessed three different prototype tumor strategies leading
to the same equilibrium level. To get further insight into the relative 
contributions of both the interactions with normal cells, and the intrinsic tumor-specific
fitness, we have analyzed the dynamics of multiple tumor strategies. This
approach is motivated by the finding that in large populations, new cell types
arise quickly through spontaneous mutations. It is thus a requirement for a
winning strategy to be able to compete with many others, although many of them are likely to exist only in very low frequencies.

The analysis of multiple tumor strategies in a three player game shows that the
affine fitness function introduces correction terms that cause an effective
interaction of the tumor cells. The strength of these interaction terms was
given by the constant fitness advantage, and the absolute payoff of tumor to
normal cells. We have then classified the dynamics of the system based on the
 payoff to normal cells and the constant fitness advantage. We find that the
 most successful tumor strategy has both a constant fitness advantage and a
 payoff to normal cells (relative to the payoff to itself). 
 
 The payoff to normal
 cells could be mediated by a mobile growth factor such as VEGF. This growth
 factors is secreted by tumor cells to attract blood vessels that, in turn,
 supply the tumor with nutrients \cite{MuellerNRC2004}. Interestingly, the interaction through
 VEGF is also a target for drug interventions yet with
 ambivalent success \cite{CarmelietN2005}. Our analysis also elucidates
 when such an intervention would be successful: In the replicator dynamics, an
 equilibrium between tumor and normal cells exists only if $a-c< t-s < b-d$.
 A therapeutic success would occur if normal cells dominate the dynamics. This
 requires $a-c$ and $b-d$ to be larger than $t-s$. An VEGF inhibitor would
 reduce the parameter $b$, ideally to zero. This is however, not sufficient for
 the replicator dynamics to favour normal cells, because also $a-c$ must become
 larger than $t-s$. A successful therapy must additionally reduce $c$ to zero such
 that $t-s<a$. However, this may be difficult, or even impossible,
 to achieve for tumor cells with a high constant fitness advantage.
 
In our model of tumor-normal interactions, all strategies were
given. In cancer, however, new
strategies are thought to arise through mutations. In the future it could thus
be an interesting extension to the model to include mutations
that transform one strategy into another, see for example
\cite{FudenbergTPB2006} for a general analysis of evolutionary game
theory with mutations in finite populations. Such an extension may
also be capable of assessing the interactions with cancer stem cells, which
are hypothesized to form a distinct tumor
subpopulation that replenishes normal tumor cells
\cite{wicha2006cancer,clarke2006cancer}. 

\appendix

\section{Fixed points of three player strategies}
\label{sec:appendix-a}
Rewrite matrix $\boldsymbol{M}'$ of Eq.~\ref{eq:3-player-M-prime} in normal form:
\begin{equation}
\label{eq:3-player-M-prime-normal}
\boldsymbol{M}' = 
\bordermatrix{ 
& N & T_1 & T_2 \cr 
N & 0 & \beta_1 - \sigma_1 & \beta_2 - \sigma_2 \cr
T_1 & -\alpha_1 + \sigma_1 & 0 & \beta_2 -b_2 -\sigma_2 + \sigma_1 \cr
T_2 & -\alpha_2 + \sigma_2 &  \beta_1-b_1 - \sigma_1 + \sigma_2 & 0 \cr }
=: 
\begin{pmatrix}
 0 & \delta & \gamma \cr
 \alpha & 0 & \epsilon \cr
\eta & \beta  & 0 \cr 
\end{pmatrix}.
\end{equation}
Now define:
\begin{align}
\omega_1 &= \delta\epsilon + \gamma\beta - \epsilon\beta\\
\omega_2 &= \alpha\gamma + \epsilon\eta - \gamma\eta\\
\omega_3 &= \eta\delta + \alpha\beta - \alpha\delta
\end{align}
One finds:
\begin{align}
\omega_1 &= (\beta_1 - \sigma_1)(\beta_2 - \sigma_2) - (b_1 - \sigma_2)(b_2 -
\sigma_1) \\
\omega_2 &= (\alpha_2 - \sigma_2)(b_2 - \sigma_1) - (\alpha_1 -
\sigma_1)(\beta_2 - \sigma_2) \\
\omega_3 &= (\alpha_1 - \sigma_1)(b_1 - \sigma_2) - (\alpha_2 -
\sigma_2)(\beta_1 - \sigma_1),
\end{align}

It can be shown \cite{StadlerBMB1990} that there exists a fixed point in the
interior of $S_2$, iff all $\omega_i$, $i=1,\ldots,3$ have the same sign, $\sgn \omega_i = \Sigma$
\cite{StadlerBMB1990}.
Its coordinates are given by $x^* = (\omega_1,\omega_2,\omega_3)^\top/\omega$.
The stability of $x^*$ is determined by the determinant $\Delta =
\alpha\beta\gamma + \delta\epsilon\eta$, that is:
\begin{align}
\Delta &= (\alpha_1 - \sigma_1)(b_1-\sigma_2)(\beta_2-\sigma_2) +
(\alpha_2-\sigma_2)(b_2 - \sigma_1)(\beta_1-\sigma_1)\cr
&\qquad -
(\beta_1-\sigma_1)(\beta_2-\sigma_2)(\alpha_1-\sigma_1+\alpha_2-\sigma_2).
\end{align}
The interior fixed point is stable iff both eigenvalues of the Jacobian,
$\lambda_{1/2} = -\Sigma (\Delta \pm \sqrt{\Delta^2 -
4\omega_1\omega_2\omega_3})$, have negative real parts. This requires that
$\sgn \Delta >0 $, and $\Sigma > 0$. Hence all $\omega_i$ need to be positive.

These conditions simplify for the tumor strategies defined in
Eq.~\ref{eq:T1-T3}. These imply $\alpha_i-\sigma_i = -1$, and $\beta_i -
\sigma_i = 1$, $i=1,\ldots,3$. With the additional assumption $b_i=b$,
$i=1,\ldots, 3$, it follows that
\begin{align*}
\omega_1 &= 1 - (b - \sigma_1)(b-\sigma_2) \cr
\omega_2 &= 1 - (b - \sigma_1) \cr
\omega_3 &= 1 - (b - \sigma_2).
\end{align*}
The determinant reads
\begin{equation}
\Delta = 2 - (b - \sigma_1) -(b-\sigma_2) = \omega_2 + \omega_3.
\end{equation}
Hence the condition $\sgn \Delta = \Sigma$ is fulfilled if $\sgn\omega_1 =
\sgn\omega_2 = \Sigma$. The conditions for each $\omega_i$ to be positive are:
\begin{align}
\omega_1 > 0 \qquad \Leftrightarrow \qquad & \left| b
+\frac{\sigma_1+\sigma_2}{2} \right| < 
\sqrt{\left(\frac{\sigma_1+\sigma_2}{2}\right)^2 - (1-\sigma_1\sigma_2)}
\\
\omega_2 > 0 \qquad \Leftrightarrow \qquad &b > 1+\sigma_1 \\
\omega_3 > 0 \qquad \Leftrightarrow \qquad &b > 1+\sigma_2 
\end{align}
For the parameter values of the tumor strategies $T_i$ considered here,
$\omega_1 < 0$ if both $\omega_2$, and $\omega_3$ are negative. It thus follows
that all three $\omega_i$ are negative, iff $b > 1+ \max\{\sigma_1,\sigma_2\}$.
Then the interior fixed point $x^*$ is unstable. For the interior fixed point
to be stable, all three conditions Eq. 33--35 need to be fulfilled.

\bibliography{lit,hani}
\bibliographystyle{genres}

\end{document}